\documentclass[12pt,dvipsnames]{amsart}

\usepackage[english]{babel}
\usepackage[utf8]{inputenc}
\usepackage{amssymb,amsthm,amsfonts,amsmath}
\usepackage{mathrsfs}
\usepackage{verbatim}
\usepackage{mathtools}
\usepackage[shortlabels]{enumitem}

\usepackage{tikz}
\usepackage{pgfmath}
\usetikzlibrary{matrix,arrows}
\usetikzlibrary {graphs.standard}
\usetikzlibrary{decorations.pathmorphing,positioning}
\tikzset{mydeco/.style={decoration={random steps,segment length=.08em,amplitude=.05em},decorate,line cap=round}}
\usepackage[dvipsnames]{xcolor}

\usepackage[normalem]{ulem}
\usepackage{fontawesome5}

\allowdisplaybreaks

	\definecolor{apricot}{rgb}{0.98, 0.81, 0.69}
\definecolor{aquamarine}{rgb}{0.5, 1.0, 0.83}

\usepackage{autobreak}

\usepackage{url}
\usepackage{amsopn}
\usepackage{graphicx}

\usepackage{enumitem,titletoc}
\definecolor{darkblue}{rgb}{0.0, 0.0, 0.55}
\usepackage[backref=page,colorlinks,breaklinks,linkcolor=BrickRed,citecolor=OliveGreen,urlcolor=darkblue,hypertexnames=true]{hyperref}
\usepackage[a4paper,margin=2.5cm]{geometry}
\hfuzz=5.002pt

\usepackage{ytableau}

\usepackage[baseline]{euflag}

\usepackage{setspace}
\linespread{1.08}

\setcounter{tocdepth}{3}

\makeatletter
\newtheorem*{rep@thm}{\rep@title}
\newcommand{\newreptheorem}[2]{%
	\newenvironment{rep#1}[1]{%
		\def\rep@title{#2 \ref{##1}}%
		\begin{rep@thm}}%
		{\end{rep@thm}}}
\makeatother
\newreptheorem{thm}{Theorem}
\newreptheorem{prop}{Proposition}
\newreptheorem{cor}{Corollary}

\def\la{\lambda}

\DeclareMathOperator{\EE}{E}

\DeclareMathOperator{\het}{ht}

\def\la{\lambda}

\numberwithin{equation}{section}

\newcommand{\N}{\mathbb N}

\newcommand{\CC}{\mathbb C}

%
%
%
\newtheorem{theorem}{Theorem}[section]

\newtheorem{lemma}[theorem]{Lemma}
\newtheorem{proposition}[theorem]{Proposition}

\theoremstyle{definition}
\newtheorem{definition}[theorem]{Definition}
\newtheorem{remark}[theorem]{Remark}
\newtheorem{example}[theorem]{Example}

\begin{document}

\title{Quantum Max Cut for complete tripartite graphs}

\author[T. \v{S}trekelj]{Tea \v{S}trekelj}
\address{Department of Mathematics, FAMNIT, University of Primorska, Koper \&  Institute of Mathematics, Physics and Mechanics, Ljubljana, Slovenia.}
\email{tea.strekelj@famnit.upr.si}

\thanks{The author was supported by the Slovenian Research and Innovation Agency grant J1-60011.}

\subjclass[2020]{46N50, 05E15, 20C30, 81P45, 81R05, 82B20, 68R10}

\keywords{Quantum maximum cut, qudit, swap operator, local Hamiltonian, eigenvalue, symmetric group, integer partition, Littlewood-Richardson coefficient, complete tripartite graph}

\begin{abstract}
The Quantum Max-$d$-Cut ($d$-QMC) problem is a special instance of a $2$-local Hamiltonian problem, representing the quantum analog of the classical Max-$d$-Cut problem. The $d$-QMC problem seeks the largest eigenvalue of a Hamiltonian defined on a graph with $n$ vertices, where edges correspond to swap operators acting on $(\mathbb{C}^d)^{\otimes n}$. In recent years, progress has been made by investigating the algebraic structure of the $d$-QMC Hamiltonian. Building on this approach, this article solves the $d$-QMC problem for complete tripartite graphs for small local dimensions, $d \le 3$.

\end{abstract}

\maketitle




\newpage 

\section{Introduction}

The $k$-local Hamiltonian problem is a central problem in quantum complexity theory. The goal is to determine the smallest eigenvalue (or ground state energy) of a Hamiltonian $H$, which is a self-adjoint matrix acting on the space of $n$ qubits, $(\mathbb{C}^2)^{\otimes n}$. This matrix, of dimension $2^n \times 2^n$, is structured as a sum of local terms. In fact, for a chosen $k\leq n,$
$$
H = \sum_{\substack{S \subseteq \{1,\ldots,n\} \\[.5mm] |S| = k}} H_S,
$$
where each $H_S$ acts nontrivially on a subset (of $S$) of at most $k$ qubits and as the identity on the rest. Such an $H$ is called a $k$-local Hamiltonian.

We investigate generalizations of the Quantum Max Cut (QMC) problem, a special instance of the $2$-local Hamiltonian problem. Named by Gharibian and Parekh~\cite{GP19}, QMC serves as a quantum analogue of the classical Max Cut (MC) problem, with its corresponding input matrix known as the QMC Hamiltonian. The problem arises naturally in physics, where it corresponds to finding the ground state energy of the anti-ferromagnetic Heisenberg model. This model is crucial for describing the magnetic properties of insular crystals under the $2$-local assumption that only nearest-neighbour interactions are significant~\cite{Aue,BDZ}.

Beyond its physics origins, the QMC problem has garnered significant attention in computational complexity theory. As a simple prototype of a QMA-complete problem~\cite{PM17}, it has become a valuable testbed for designing approximation algorithms for other QMA-hard problems~\cite{AMG20,PT21,PT22,Lee22,Kin23}. Progress on solving the QMC problem includes exact analytical solutions for specific graph families, such as complete bipartite graphs~\cite{LM62} and one-dimensional chains~\cite{LM16}. More recently, second-order cone relaxations that provide approximations for large graphs~\cite{HTPG} were developed.

In this article, we address the $d$-QMC problem, a generalization of the QMC problem to systems of $n$ qudits. The Hamiltonian for this problem acts on the space $(\mathbb{C}^d)^{\otimes n}$. The motivation for studying the $d$-QMC problem stems from the challenge of determining the ground state energies of $\operatorname{SU}(d)$-Heisenberg models on lattices~\cite{KT,BAMC,PM21}. Furthermore, it serves as the quantum analogue of both the classical Max-$d$-Cut problem, which concerns finding maximum $d$-colorable subgraphs, and the anti-ferromagnetic $d$-state Potts model~\cite{FJ}.

The focus of this paper is on the $d$-QMC problem pertaining to a specific family of graphs, namely complete tripartite graphs. We tackle the special case where the local dimension $d$ is small, i.e., $d\leq3.$

\subsection{Quantum Max $d$-Cut and swap matrices}
A QMC Hamiltonian pertains to a given graph $G$ on say $n$ vertices. Throughout, we denote the vertex set of $G$ by V$(G)$ and the edge set of $G$ by E$(G).$ 
To address the $d$-QMC problem, we investigate the algebraic structure of the QMC Hamiltonian, following the approach of  \cite{BCEHK24,TRZ,KSV+}.
This approach starts by expressing the QMC Hamiltonian in terms of the {swap matrices} Swap$_{ij}^{(d)}.$\footnote{In physics literature, these are often called SWAP or exchange operators \cite{NC}.
} 

\begin{definition} Fix $n$ and $1\leq i < j \leq n.$ The (\textbf{qudit}) \textbf{swap matrix} Swap$_{ij}^{(d)}$  is defined by its action on rank one tensors as
	$$
	\text{Swap}_{ij}^{(d)}(v_1 \otimes \cdots \otimes v_i \otimes \cdots \otimes v_j \otimes \cdots \otimes v_n) = v_1 \otimes \cdots \otimes v_j \otimes \cdots \otimes v_i \otimes \cdots \otimes v_n
	$$
 for any $v_1, \ldots, v_n \in \CC^d.$
\end{definition}

	The action of swap matrices on a qudit space defines a representation $\rho_n^{(d)}$ of the symmetric group $S_n$ on $(\mathbb{C}^d)^{\otimes n}$ defined by
	$$
	\rho_n^{(d)}(\pi) (v_1 \otimes \cdots \otimes v_n) = v_{\pi^{-1}(1)} \otimes \cdots \otimes v_{\pi^{-1}(n)}.
	$$
	Denoting by $\CC[S_n]$ the group algebra of $S_n,$
	the image $\rho_n^{(d)}\big(\CC[S_n] \big)$ is a subalgebra of $M_{d^n}(\CC).$ We denote it by $M^{\text{Sw}_d}_n(\mathbb{C})$ and call it the \textbf{$d$-swap algebra}. 

\begin{definition}\label{def:qmc}
     For a graph $G$ on $n$ vertices, the \textbf{Quantum Max $d$-Cut} \textbf{Hamiltonian} ($d$-QMC Hamiltonian) pertaining to $G$ is defined as
    \begin{equation}\label{eq:qmcd}\tag{$H_G^d$}
        H_G^{d} = \sum_{(i,j) \in \EE(G)} 2 \left(I - \textnormal{Swap}^{(d)}_{ij}\right).
    \end{equation}
\end{definition}
The \textbf{$d$-QMC problem} is about determining the largest eigenvalue of the $d$-QMC Hamiltonian $H_G^d$ in \eqref{eq:qmcd}. 
The original QMC problem, for local dimension~$d=2$, was formulated with a Hamiltonian built from Pauli matrices. 
This approach was later generalized to~$d>2$ by~\cite{CJKKW23+}, who employed Gell-Mann matrices as a natural higher-dimensional extension of the Pauli matrices. 
The alternative formulation using swap matrices as in Definition \ref{def:qmc} was first studied in~\cite{BCEHK24} for $d=2$ and subsequently  by~\cite{KSV+} for general~$d.$ 

The QMC (and $d$-QMC) problem, as an instance of the general $2$-local Hamiltonian problem, is known to be computationally intractable. It belongs to the Quantum Merlin Arthur (QMA)-hard complexity class~\cite{KSV02,KKR06}, the quantum analogue of NP-hard. Given that the QMC problem is hard to solve, research has focused on restricted cases, such as finding exact solutions for specific instances~\cite{LM62}, designing efficient high-precision approximation algorithms for the largest eigenvalue~\cite{LVV15}, or achieving constant-factor approximations~\cite{GK12,BH13,BGKT19,HM17}. Other work has explored the hardness of computing ground state properties~\cite{GH}. In this paper, we leverage the algebraic structure of swap matrices to solve the $d$-QMC problem for $d = 2,3$ on the family of complete tripartite graphs $K_{p,q,r}$ with $p \geq q \geq r.$

\begin{theorem}\label{th:prelim}
	Let $n, p,q,r \in \N$ be such that $p\geq q\geq r >0$ and $p+q+r=n.$
	\begin{enumerate}[(a)]
		\item The solution to the $2$-QMC problem for $K_{p,q,r}$ is
		\begin{enumerate}[(i)]
			\item $2(n-p)(p+1)$ if $p\geq q+r;$
			\item $4k^2-1$ if $n=2k$ and $p < q+r;$
			\item $4k(k+1)-3$ if $n=2k+1$ and $p< q+r.$
		\end{enumerate}
		
		\item The solution to the $3$-QMC problem for $K_{p,q,r}$ is  $2 n (2 + p + q) - 2 (p^2 + q + q^2 + p (2 + q)).$
	\end{enumerate}
\end{theorem}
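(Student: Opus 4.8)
The plan is to diagonalize $H_{K_{p,q,r}}^{d}$ simultaneously on all isotypic components produced by Schur--Weyl duality. First I would write $H_{K_{p,q,r}}^{d}=2\lvert \EE(K_{p,q,r})\rvert\,I-2\,T$, where $T=\sum_{(i,j)\in \EE(K_{p,q,r})}\mathrm{Swap}^{(d)}_{ij}$ and $\lvert \EE(K_{p,q,r})\rvert=pq+pr+qr$. The structural observation is that the edge set of $K_{p,q,r}$ is the complement, inside the edge set of $K_n$, of the three cliques on the parts $A,B,C$ of sizes $p,q,r$. Hence $T=\Sigma_n-\Sigma_p-\Sigma_q-\Sigma_r$, where $\Sigma_m$ denotes the sum of the swap matrices indexed by all pairs inside the relevant vertex set; each $\Sigma$ is the image under $\rho^{(d)}$ of the class sum of transpositions, and is therefore central in the group algebra of the corresponding symmetric group.

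Next I would invoke Schur--Weyl duality to write $(\CC^d)^{\otimes n}=\bigoplus_{\lambda}V_\lambda\otimes W_\lambda$, the sum over partitions $\lambda\vdash n$ with $\ell(\lambda)\le d$. Since $\Sigma_n$ is central in $\CC[S_n]$, it acts on the $V_\lambda$ block as the scalar given by the content sum $c_\lambda=\sum_{(i,j)\in\lambda}(j-i)$. Restricting the $S_n$-action to $S_p\times S_q\times S_r$ and decomposing $V_\lambda\big|_{S_p\times S_q\times S_r}=\bigoplus_{\mu,\nu,\kappa}c^{\lambda}_{\mu\nu\kappa}\,(V_\mu\boxtimes V_\nu\boxtimes V_\kappa)$ via Littlewood--Richardson coefficients $c^{\lambda}_{\mu\nu\kappa}$, the operators $\Sigma_p,\Sigma_q,\Sigma_r$ act on each summand as $c_\mu,c_\nu,c_\kappa$. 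Therefore the spectrum of $T$ is exactly $\{\,c_\lambda-c_\mu-c_\nu-c_\kappa:\ell(\lambda)\le d,\ c^{\lambda}_{\mu\nu\kappa}>0\,\}$ (positivity already forces $\ell(\mu),\ell(\nu),\ell(\kappa)\le\ell(\lambda)\le d$), and the $d$-QMC value equals $2(pq+pr+qr)-2\min\bigl(c_\lambda-c_\mu-c_\nu-c_\kappa\bigr)$ over this admissible set.

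It then remains to solve a finite optimization: minimize $c_\lambda-c_\mu-c_\nu-c_\kappa$, i.e.\ make $c_\lambda$ small while making $c_\mu+c_\nu+c_\kappa$ large, subject to Littlewood--Richardson positivity. For $d=2$ every partition has at most two rows; the content sum of a two-row partition is an explicit quadratic in its row-difference, and $c^{\lambda}_{\mu\nu\kappa}>0$ is precisely the $\mathrm{SU}(2)$ coupling condition that the total spin $(\lambda_1-\lambda_2)/2$ occur in the triple tensor product of the spins $(\mu_1-\mu_2)/2,(\nu_1-\nu_2)/2,(\kappa_1-\kappa_2)/2$. The minimum is attained by taking $\mu,\nu,\kappa$ to be single rows (maximal spins $p/2,q/2,r/2$) and $\lambda$ with the smallest feasible row-difference: this difference is $p-q-r$ when $p\ge q+r$, and equals $0$ or $1$ according to the parity of $n$ when $p<q+r$. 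Substituting into $2(pq+pr+qr)-2(c_\lambda-c_\mu-c_\nu-c_\kappa)$ yields the three cases of part~(a).

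I expect the $d=3$ optimization to be the main obstacle. Maximizing $c_\mu+c_\nu+c_\kappa$ pushes $\mu,\nu,\kappa$ toward single rows, whereas minimizing $c_\lambda$ pushes $\lambda$ toward three balanced rows, and Littlewood--Richardson positivity ties these opposing tendencies together; unlike the two-row case there is no single spin label, so the feasible region is governed by $\mathrm{SU}(3)$ branching rather than one triangle inequality. The plan is to guess an explicit optimal quadruple $(\lambda,\mu,\nu,\kappa)$ — with $\mu,\nu,\kappa$ as flat as the parts allow and $\lambda$ as balanced as possible among $\le 3$-row partitions reachable from $s_\mu s_\nu s_\kappa$ — to prove optimality by an exchange/monotonicity argument on content sums, and to verify $c^{\lambda}_{\mu\nu\kappa}>0$ directly. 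Evaluating $2(pq+pr+qr)-2(c_\lambda-c_\mu-c_\nu-c_\kappa)$ at this quadruple should then collapse to the closed form in part~(b).
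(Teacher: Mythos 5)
Your reduction is exactly the one the paper uses: writing $H^d_{K_{p,q,r}}$ as the clique Hamiltonian minus the three part-cliques, letting the central class sums of transpositions act by content sums via Schur--Weyl, and decomposing the restriction to $S_p\times S_q\times S_r$ through iterated Littlewood--Richardson coefficients. This correctly reduces the theorem to extremizing $2(pq+pr+qr)-2\bigl(c_\lambda-c_\mu-c_\nu-c_\kappa\bigr)$ over LR-positive tuples, which is the paper's Proposition~2.3 combined with the contents formula. But this reduction is the short part of the paper; essentially all of the work (Lemma~3.1 and the whole of Section~3) goes into solving the resulting optimization, and the proposal does not do that. For $d=2$ your translation of LR positivity for two-row shapes into the $\mathrm{SU}(2)$ coupling condition on the spins $(\sigma_1-\sigma_2)/2$ is correct, and is arguably a cleaner handle on feasibility than the paper's box-exchange argument; but the sentence ``the minimum is attained by taking $\mu,\nu,\kappa$ to be single rows and $\lambda$ with the smallest feasible row-difference'' hides the only nontrivial point, namely that flattening $\mu,\nu,\kappa$ and balancing $\lambda$ pull against each other through the coupling constraint when $p\ge q+r$. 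One must check, e.g., that decreasing $\mu_1-\mu_2$ from $p$ by two costs $p$ in $c_\mu$ while saving at most $p-q-r$ in $c_\lambda$, so the trade never pays. This is a short computation, but it has to appear.

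For $d=3$ the gap is not a missing computation but the entire argument. You correctly identify the obstacle and then propose to ``guess an optimal quadruple and prove optimality by an exchange/monotonicity argument''; that exchange argument \emph{is} the paper's Section~3.2 --- a five-stage sequence of box moves with many sub-cases, preceded by Lemma~3.1 showing the optimal $\lambda$ must have height exactly $d$ --- so as written this is a plan, not a proof. Your description of the guess is also imprecise: the optimizer is $\lambda=(p,q,r)$ with $\mu=(p),\nu=(q),\zeta=(r)$, and $(p,q,r)$ is not ``as balanced as possible'' in any absolute sense; it is what the containment and horizontal-strip constraints force once $\mu,\nu,\zeta$ are single rows, and one must still rule out trading flatness of $\mu,\nu,\zeta$ for a more balanced $\lambda$. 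Finally, the concluding substitution is asserted rather than performed, and performing it exposes a discrepancy you would need to resolve: at the balanced-$\lambda$ tuple the value is $2k(k+1)$ for $n=2k$ and $2k(k+2)$ for $n=2k+1$ (for instance $K_{1,1,1}=K_3$ gives $6$), whereas every eigenvalue of $H^2_G=2|\EE(G)|\,I-2T$ is an even integer, so the odd constants $4k^2-1$ and $4k(k+1)-3$ in (a)(ii)--(iii) cannot literally arise from this formula. Until that mismatch between the optimizer and the displayed constants is sorted out, the final step of the proposal does not close.
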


While the above theorem captures the essence of the main result, the proper statement requires additional background. We dedicate the following two subsections to the requisite combinatorial theory.

\subsection{Partitions and irreducible representations of $S_n$}
The $d$-swap algebra $M^{\text{Sw}_d}_n(\mathbb{C})$ is semisimple as it is the image of a representation of $\CC[S_n].$ It hence decomposes into a direct sum of (simple) matrix algebras that are images of some of the \textit{irreducible representations (irreps)} of $S_n.$
It is well-known that the irreps of $S_n$ are indexed by partitions $\lambda$ of $n$ (denoted by $\lambda \vdash n$), where
$$
\lambda = (\lambda_1, \ldots, \lambda_k)\in\N^k, \quad \lambda_1 \geq \cdots \geq \lambda_k > 0, 
\quad \sum_{i=1}^k \lambda_i = n.
$$
The number $k$ is referred to as the \textbf{height} of $\lambda$ and denoted by $k=\het(\lambda)$. For any $\ell<k$, the partition $(\lambda_1,\ldots,\lambda_\ell)$ is called a \textbf{head}, and the partition $(\lambda_{l+1},\ldots,\lambda_f)$ is called a \textbf{tail} of $\lambda$. 
A partition $\lambda \vdash n$ is often represented by its \textbf{Young diagram}. A Young diagram of shape $\lambda = (\lambda_1, \ldots, \lambda_k)$ has $n$ boxes that are distributed into $k$ rows, where the $i$th row consists of $\lambda_i$ boxes. E.g., if $\lambda= (4,2,1) \vdash 7,$ then $\het(\lambda)=3$ and its Young diagram looks like
\[
\ytableausetup{smalltableaux}
\ydiagram{4,2,1}
\]

\vspace{2mm}

A \textbf{Young tableau} of shape $\la$ is a filling of the $n$
boxes of the corresponding Young diagram with the integers $1,\ldots,n$ such that each integer is used exactly once.

\subsection{Decomposition of the $d$-swap algebra}
The precise decomposition of $M^{\text{Sw}_d}_n(\mathbb{C})$ into irreps as presented in the following theorem can be derived by applying the Schur-Weyl duality. 

\begin{theorem}{\cite[Theorem 2.2]{KSV+}}
	The $d$-swap algebra $M^{\text{Sw}_d}_n(\mathbb{C})$ decomposes into a direct sum of simple algebras generated by the irreps $\rho_\lambda$ of $S_n$ corresponding to partitions of $n$ with at most $d$ rows,
	$$
	M^{\text{\textnormal{Sw}}_d}_n(\mathbb{C}) \cong \bigoplus_{\substack{\lambda\vdash n \\[.5mm] \het(\lambda)\le d}} \rho_{\lambda} (\mathbb{C}S_n).
	$$
\end{theorem}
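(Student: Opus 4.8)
The plan is to derive the decomposition directly from Schur--Weyl duality for the commuting actions of $S_n$ and $\mathrm{GL}_d(\CC)$ on the qudit space $(\CC^d)^{\otimes n}$. First I would record these two actions: the symmetric group acts by permuting the tensor legs, which is exactly the representation $\rho_n^{(d)}$ generating $M^{\text{Sw}_d}_n(\CC)$, whereas $\mathrm{GL}_d(\CC)$ acts diagonally via $g\cdot(v_1\otimes\cdots\otimes v_n)=gv_1\otimes\cdots\otimes gv_n$. A one-line check on rank-one tensors shows the two actions commute, so $M^{\text{Sw}_d}_n(\CC)$ is contained in the commutant of the $\mathrm{GL}_d(\CC)$-action, and conversely; this is precisely the double-commutant setting in which Schur--Weyl duality applies.

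The key input is the Schur--Weyl decomposition of $(\CC^d)^{\otimes n}$ as an $S_n\times\mathrm{GL}_d(\CC)$-bimodule,
\[
(\CC^d)^{\otimes n}\;\cong\;\bigoplus_{\substack{\lambda\vdash n\\[.5mm]\het(\lambda)\le d}} S^\lambda\otimes W_\lambda,
\]
where $S^\lambda$ is the irreducible $S_n$-module carrying $\rho_\lambda$ and $W_\lambda$ is the irreducible $\mathrm{GL}_d(\CC)$-module of highest weight $\lambda$; on each summand $S_n$ acts only on the first factor and trivially on $W_\lambda$. The crux is pinning down the index range. The multiplicity of $S^\lambda$ in $(\CC^d)^{\otimes n}$ equals $\dim W_\lambda$, which in turn is the number of semistandard Young tableaux of shape $\lambda$ with entries in $\{1,\ldots,d\}$; this number is positive precisely when no column of $\lambda$ exceeds length $d$, i.e.\ when $\het(\lambda)\le d$, and vanishes otherwise. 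Hence exactly the partitions with at most $d$ rows contribute.

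To pass from the module decomposition to the algebra statement, observe that $\rho_n^{(d)}(\CC[S_n])$ acts on the above sum as $\bigoplus_\lambda\big(\rho_\lambda(\CC[S_n])\otimes\mathrm{id}_{W_\lambda}\big)$, the sum running over $\het(\lambda)\le d$; the summands indexed by $\het(\lambda)>d$ are absent simply because they occur with multiplicity zero. Since $\CC[S_n]$ is semisimple and each $\rho_\lambda$ is irreducible, Burnside's theorem (equivalently, the double-commutant theorem together with Artin--Wedderburn) forces $\rho_\lambda(\CC[S_n])=\mathrm{End}(S^\lambda)\cong M_{\dim S^\lambda}(\CC)$, a full matrix algebra, and distinct $\lambda$ yield mutually orthogonal blocks. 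Collecting these gives the asserted isomorphism $M^{\text{Sw}_d}_n(\CC)\cong\bigoplus_{\het(\lambda)\le d}\rho_\lambda(\CC[S_n])$.

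The main obstacle is not the algebraic bookkeeping, which is routine once the bimodule decomposition is in hand, but the Schur--Weyl decomposition itself and, within it, the clean identification of the vanishing locus $W_\lambda=0\iff\het(\lambda)>d$. Since we are entitled to invoke Schur--Weyl duality as a known result, the only genuinely content-bearing step to verify carefully is this vanishing condition---best handled through the semistandard-tableau count for $\dim W_\lambda$---together with the observation that it is exactly the partitions exceeding $d$ rows that drop out of the qudit representation.
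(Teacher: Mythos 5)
Your proposal is correct and follows exactly the route the paper indicates: it cites this result from \cite[Theorem 2.2]{KSV+} and notes that it is derived by applying Schur--Weyl duality, which is precisely your argument (bimodule decomposition, the vanishing of $W_\lambda$ for $\het(\lambda)>d$, and the double-commutant/Burnside step to pass to the algebra statement). No discrepancies to report.
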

The $d$-QMC Hamiltonian thus becomes block diagonal with an appropriate choice of a basis and when solving the $d$-QMC problem, the large-scale eigenvalue computation is broken down into several independent and more manageable calculations on each block.

In addition to the above decomposition, the precise relations defining the $d$-swap algebra are known (see \cite{BCEHK24} for $d=2$ and \cite{KSV+} for the general case). In fact, 
the $d$-swap algebra  is isomorphic to 
the quotient of the free algebra
$\CC\langle\text{swap}_{ij}\colon 1\le i<j\le n\rangle$ generated by the $\binom{n}{2}$ freely noncommuting variables modulo the relations defining $S_n$ plus one single relation. The latter is equivalent to the vanishing of an antisymmetrizer on $d+1$ letters on $(\CC^d)^{\otimes n}$ 
(\cite[Section 11.6]{Probook}, see also \cite[Theorem 2.8.1]{Gri}).

\section{Preliminaries}

We denote by $K_{p,q,r}$ the tripartite graph on $n=p+q+r$ vertices. This means that the $n$ vertices are divided into $3$ disjoint subsets of size $p,q$ and $r,$ respectively, and every edge in the graph connects a vertex in one set to a vertex in a different set. We will assume without loss of generality that $p \geq q \geq r>0.$ E.g., for $p=q=3$ and $r=2,$ the graph $K_{3,3,2}$ looks like
\begin{center}
\begin{tikzpicture}[
	scale=1, 
	every node/.style={transform shape}, 
	v1_style/.style={circle, draw, text=white, fill=black, minimum size=20pt, inner sep=0pt},
	v2_style/.style={circle, draw, fill=lightgray, minimum size=20pt, inner sep=0pt},
	v3_style/.style={circle, draw, fill=white, minimum size=20pt, inner sep=0pt},
	partition_label/.style={font=\Large\bfseries}
	]
	
	\def\xdist{3} 
	\def\ydist{1.5} 
	
	\node[v1_style] (v11) at (0, 0.5*\ydist) {$1$};
	\node[v1_style] (v12) at (0, -0.5*\ydist) {$2$};
	
	\node[v2_style] (v21) at (\xdist, 1*\ydist) {$3$};
	\node[v2_style] (v22) at (\xdist, 0) {$4$};
	\node[v2_style] (v23) at (\xdist, -1*\ydist) {$5$};
	
	\node[v3_style] (v31) at (2*\xdist, 1*\ydist) {$6$};
	\node[v3_style] (v32) at (2*\xdist, 0) {$7$};
	\node[v3_style] (v33) at (2*\xdist, -1*\ydist) {$8$};
	
	\draw (v11) -- (v21);
	\draw (v11) -- (v22);
	\draw (v11) -- (v23);
	\draw (v12) -- (v23);
	\draw (v12) -- (v21);
	\draw (v12) -- (v22);
	\draw (v12) -- (v31);
	
	\draw (v11) -- (v31);
	\draw (v11) -- (v32);
	\draw (v11) -- (v33);
	\draw (v12) -- (v32);
	\draw (v12) -- (v33);
	
	\draw (v21) -- (v32);
	\draw (v21) -- (v31);
	\draw (v21) -- (v33);
	\draw (v22) -- (v32);
	\draw (v22) -- (v31);
	\draw (v22) -- (v33);
	\draw (v23) -- (v31);
	\draw (v23) -- (v32);
	\draw (v23) -- (v33);
	
\end{tikzpicture}
\end{center}

Note that the following relation holds
\begin{equation}\label{eq:gr}
	K_{p,q,r} = K_n - (K_p \oplus K_q \oplus K_r),
\end{equation}
where the operation $-$ yields a graph obtained from $K_n$ after erasing the edges of $K_p \oplus K_q \oplus K_r$ appearing in $K_n.$
On level of the Hamiltonians, \eqref{eq:gr} implies that
$$
H^d_{K_{p,q,r}} = H^d_{K_n} -\big( H^d_{K_p} \otimes  I_{d^{q+r}} + 
I_{d^{p}} \otimes H^d_{K_q} \otimes I_{d^{r}}  + I_{d^{p+q}}\otimes H^d_{K_r}  \big)
$$
and thus for any irrep of $S_n$ indexed by a partition $\lambda \vdash n,$
\begin{equation}\label{eq:ham}
H^\lambda_{K_{p,q,r}} = H^\lambda_{K_n} -\big( (H^d_{K_p} \otimes  I_{d^{q+r}})^\lambda + 
(I_{d^{p}} \otimes H^d_{K_q} \otimes I_{d^{r}})^\lambda  + (I_{d^{p+q}}\otimes H^d_{K_r})^\lambda  \big).
\end{equation}

For a clique $K_n$ and any  partition $\lambda \vdash n,$ it is known that $H_{K_{n}}^\lambda = \eta_\la I$ is a scalar matrix (see \cite[Lemma 2.11]{BCEHK24} for the case $d=2;$ the proof extends directly to the case of a general $d$ \cite{KSV+}). 
Moreover, a precise formula for the eigenvalue $\eta_\lambda$ of $H_{K_{n}}^\lambda$ is known: 
\begin{proposition}{\cite[Proposition 6.7]{KSV+}}
	For any $\lambda \vdash n$ with rows $\lambda_1\ge\cdots\ge \lambda_d$,
	\begin{equation*}\label{eq:etaD}
		\eta_\lambda=
		n^2 +\frac{d(d-1)(2d-1)}{6}
		-\sum_{k=1}^d\big( \lambda_k - (k-1)\big)^2.
	\end{equation*}
\end{proposition}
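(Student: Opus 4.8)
The plan is to reduce the claim to a classical fact about the center of the group algebra $\CC[S_n]$. First I would rewrite the Hamiltonian at the group-algebra level: since the edge set of $K_n$ consists of all pairs $1\le i<j\le n$, Definition~\ref{def:qmc} gives
\[
H_{K_n}^d \;=\; 2\binom{n}{2} I \;-\; 2\sum_{1\le i<j\le n}\text{Swap}_{ij}^{(d)}
\;=\; \rho_n^{(d)}\!\left(n(n-1)\,e - 2\,T\right),
\]
where $e$ is the identity of $S_n$ and $T=\sum_{i<j}(i\,j)\in\CC[S_n]$ is the sum of all transpositions. Passing to the block indexed by $\lambda$, we obtain $H_{K_n}^\lambda = n(n-1)I - 2\,\rho_\lambda(T)$, so everything comes down to evaluating $\rho_\lambda(T)$.

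The key observation is that $T$ is the class sum of the conjugacy class of transpositions, hence a central element of $\CC[S_n]$. By Schur's lemma $\rho_\lambda(T)=\omega_\lambda(T)\,I$ is scalar, which already reproves the surrounding assertion that $H_{K_n}^\lambda=\eta_\lambda I$. To identify $\omega_\lambda(T)$ I would invoke the classical \emph{content formula}: the sum of transpositions acts on the irrep $\lambda$ by the scalar $\sum_{b\in\lambda}c(b)$, where $c(b)=j-k$ is the content of the box in row $k$ and column $j$. The cleanest route is via the Jucys--Murphy elements $X_k=\sum_{i<k}(i\,k)$, whose sum is exactly $T$; in the Gelfand--Tsetlin (Young) basis each $X_k$ acts diagonally with eigenvalue equal to the content of the box occupied by $k$ in the corresponding standard Young tableau, so summing over $k$ yields the total content $\sum_{b}c(b)$ independently of the tableau.

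It then remains to do the arithmetic. Summing the contents row by row gives
\[
\sum_{b\in\lambda}c(b)=\sum_{k=1}^d\sum_{j=1}^{\lambda_k}(j-k)=\frac12\sum_{k=1}^d\big(\lambda_k^2-(2k-1)\lambda_k\big),
\]
so that $\eta_\lambda=n(n-1)-\sum_{k=1}^d\big(\lambda_k^2-(2k-1)\lambda_k\big)$. Expanding $(\lambda_k-(k-1))^2=\lambda_k^2-2(k-1)\lambda_k+(k-1)^2$, using $\sum_{k=1}^d(k-1)^2=\tfrac{d(d-1)(2d-1)}{6}$ together with $n=\sum_{k=1}^d\lambda_k$ (padding $\lambda$ with zero rows up to height $d$), one checks that the difference between $\eta_\lambda$ and $n^2+\tfrac{d(d-1)(2d-1)}{6}-\sum_{k=1}^d(\lambda_k-(k-1))^2$ collapses to $n-\sum_k\lambda_k=0$, which is the asserted identity.

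The only genuinely nontrivial input is the content formula for $\omega_\lambda(T)$; everything else is bookkeeping. I expect this step to be the main obstacle only in the sense of selecting a self-contained justification — either citing the Jucys--Murphy theory, or alternatively deriving $\omega_\lambda(T)=\binom{n}{2}\chi_\lambda((1\,2))/\dim\lambda$ from the central-character formula and evaluating the ratio through the Frobenius expression for $\chi_\lambda$ on a transposition.
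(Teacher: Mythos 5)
Your argument is correct, and it follows the same route that the paper (and its reference [KSV+]) relies on: reduce $H_{K_n}^\lambda$ to the scalar action of the class sum of transpositions, identify that scalar with the content sum $\sum_{b\in\lambda}c(b)$ via the central character (equivalently, Jucys--Murphy elements), and then do the row-by-row arithmetic, which indeed collapses to the stated formula. The paper itself only cites this proposition rather than proving it, but the identity $\eta_\lambda=n^2-n-2\sum_{b\in\lambda}\mathrm{content}(b)$ and the Frobenius/Lassalle content formula that you use are exactly the ingredients it invokes in the proof of the subsequent contents formula for $\Xi$, so your derivation is a faithful, self-contained version of the intended proof.
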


Note that the matrices $H^d_{K_p} \otimes  I_{d^{q+r}}, I_{d^{p}} \otimes H^d_{K_q} \otimes I_{d^{r}}$ and $I_{d^{p+q}}\otimes H^d_{K_r}$ are in the 
 image of 
   the representation $\rho_n^{(d)}$ of $S_n$ and are hence contained in the subalgebra $\CC[S_{p}]\otimes \CC[S_q] \otimes \CC[S_r] \cong \CC[S_p\times S_q \times S_r]$ of $\CC[S_n]$.
To compute the eigenvalues of \eqref{eq:ham} we therefore need to consider the restriction of the irrep $\lambda$ of $S_n$ to the subgroup $S_p\times S_q \times S_r \subseteq S_n$. In fact, we need to understand how this restriction  decomposes as a direct sum of irreducible representations of $S_p\times S_q \times S_r$.
For this end, a branching rule is invoked, namely the Littlewood-Richardson rule \cite[Section 4.9]{Sag01} together with the Frobenius reciprocity \cite[Theorem 1.12.6]{Sag01}. More precisely, if $V_\la$ is the irreducible  module of $S_n$ corresponding to $\lambda\vdash n,$ then $V_\la$ decomposes as an $S_p\times S_q \times S_r$ module as
\begin{equation}\label{e:LR}
	V_\lambda^{\downarrow S_p\times S_q \times S_r}=\bigoplus_{\substack{\mu\vdash p,\  \nu\vdash q,\\ \zeta \vdash r}} 
	\big(V_\mu \otimes V_\nu \otimes V_\zeta\big)^{c^\lambda_{\mu\nu\zeta}},
\end{equation}
where $c^\lambda_{\mu\nu \zeta}$ is the (iterated) Littlewood–Richardson coefficient (see \cite[Section 4.9]{Sag01} and \cite{KLMS12,GL20}).

The matrices $H^d_{K_p} \otimes  I_{d^{q+r}}, I_{d^{p}} \otimes H^d_{K_q} \otimes I_{d^{r}}$ and $I_{d^{p+q}}\otimes H^d_{K_r}$  commute, so the eigenvalues of 
$$
(H^d_{K_p} \otimes  I_{d^{q+r}})^\lambda + 
(I_{d^{p}} \otimes H^d_{K_q} \otimes I_{d^{r}})^\lambda  + (I_{d^{p+q}}\otimes H^d_{K_r})^\lambda 
$$
are sums of matching eigenvalues of these three matrices. We have thus established the following proposition.

\begin{proposition}
	The solution of the $d$-QMC problem for $K_{p,q,r}$ is obtained by maximizing 
	\begin{equation}\label{eq:lmn}
		\Xi(\lambda, \mu, \nu, \zeta) = \eta_\lambda - \eta_\mu -\eta_\nu -\eta_\zeta,
	\end{equation}
	where $\lambda \vdash n = p+q+r, \mu \vdash p, \nu \vdash q, \zeta \vdash r$ are such that $c^\lambda_{\mu\nu\zeta}\neq 0.$
\end{proposition}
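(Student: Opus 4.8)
The plan is to read off the entire spectrum of $H^d_{K_{p,q,r}}$ from the two structural ingredients already assembled in this section: the block decomposition of the $d$-swap algebra and the Littlewood--Richardson branching \eqref{e:LR}. Concretely, I would show that each number $\Xi(\lambda,\mu,\nu,\zeta)$ is an eigenvalue of $H^d_{K_{p,q,r}}$ (with multiplicity governed by $c^\lambda_{\mu\nu\zeta}$), that every eigenvalue arises this way, and therefore that the largest eigenvalue---which is by definition the solution of the $d$-QMC problem---equals the maximum of $\Xi$ over admissible tuples.

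First I would fix the block structure. By the Schur--Weyl decomposition of $M^{\mathrm{Sw}_d}_n(\CC)$, the operator $H^d_{K_{p,q,r}}$ is unitarily equivalent to a direct sum, over partitions $\lambda\vdash n$ with $\het(\lambda)\le d$, of copies of the block $H^\lambda_{K_{p,q,r}}$; passing to multiplicities does not change the set of eigenvalues, so it suffices to compute the spectrum of each $H^\lambda_{K_{p,q,r}}$. Using \eqref{eq:ham} together with the scalar identity $H^\lambda_{K_n}=\eta_\lambda I$, I would write $H^\lambda_{K_{p,q,r}}=\eta_\lambda I - C^\lambda$, where $C^\lambda$ is the sum of the three commuting clique-correction terms, all lying in $\rho_n^{(d)}(\CC[S_p\times S_q\times S_r])$.

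Next I would diagonalize $C^\lambda$ via the restriction \eqref{e:LR}. Restricting the irreducible $S_n$-module $V_\lambda$ to $S_p\times S_q\times S_r$ yields the direct sum of the summands $V_\mu\otimes V_\nu\otimes V_\zeta$, each occurring with multiplicity $c^\lambda_{\mu\nu\zeta}$. On such a summand the first clique term $H^d_{K_p}\otimes I$ acts only through the $S_p$-factor, hence as $H^\mu_{K_p}=\eta_\mu I$, i.e.\ as the scalar $\eta_\mu$; likewise the remaining two terms act as $\eta_\nu$ and $\eta_\zeta$. Consequently $C^\lambda$ acts as $\eta_\mu+\eta_\nu+\eta_\zeta$ on that summand, so $H^\lambda_{K_{p,q,r}}$ acts as $\eta_\lambda-\eta_\mu-\eta_\nu-\eta_\zeta=\Xi(\lambda,\mu,\nu,\zeta)$. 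Collecting over all $\lambda$ with $\het(\lambda)\le d$ and all $(\mu,\nu,\zeta)$ with $c^\lambda_{\mu\nu\zeta}\neq 0$ produces the full spectrum, and taking the maximum yields the claim. (The height bounds on $\mu,\nu,\zeta$ need not be imposed separately, since $c^\lambda_{\mu\nu\zeta}\neq 0$ already forces $\mu,\nu,\zeta\subseteq\lambda$ and hence each to have at most $\het(\lambda)\le d$ rows.)

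I expect the only genuinely delicate point to be the identification in the previous step: that the clique term $H^d_{K_p}\otimes I$, viewed inside $\rho_n^{(d)}(\CC[S_p\times S_q\times S_r])$ and then restricted to the Littlewood--Richardson summand $V_\mu\otimes V_\nu\otimes V_\zeta$, really acts as the scalar $\eta_\mu$ attached to the smaller clique $K_p$. This requires matching the abstract branching \eqref{e:LR} of $S_n$-modules with the concrete tensor-factor action of the swap matrices, and verifying that the $S_p$-irrep appearing in the first factor is precisely the $V_\mu$ on which $H^\mu_{K_p}$ is the scalar $\eta_\mu$; the commutativity of the three correction terms, noted above, is then what permits their joint eigenvalues to be summed factorwise. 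Everything else is bookkeeping.
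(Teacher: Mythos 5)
Your proposal is correct and follows essentially the same route as the paper: block-diagonalize via the Schur--Weyl decomposition of the $d$-swap algebra, use that $H^\lambda_{K_n}=\eta_\lambda I$ is scalar, and diagonalize the three commuting clique-correction terms through the Littlewood--Richardson branching \eqref{e:LR}, on each summand of which they act as the scalars $\eta_\mu,\eta_\nu,\eta_\zeta$. The paper compresses this into the remark that the three correction matrices commute so their joint eigenvalues are sums of matching eigenvalues; your write-up merely makes the matching explicit via the isotypic summands $V_\mu\otimes V_\nu\otimes V_\zeta$ (and correctly notes that the maximization is implicitly over $\lambda$ with $\het(\lambda)\le d$).
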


\subsection{Littlewood-Richardson coefficients}
Assume $(\lambda,\mu,\nu, \zeta)$ is a tuple of partitions such that $\la \vdash n, \mu \vdash p,\nu \vdash q, \zeta \vdash r$ with $p\geq q\geq r >0$ and $p+q+r=n.$ 
Assume $\mu,\nu,\zeta$ are all subpartitions of $\la$ in the sense that $\mu_i,\nu_i,\zeta_i \leq \la_i$ for all feasible $i.$

For a partition $\sigma \in \{\nu,\zeta\},$ we denote by $\lambda[\sigma]$ the skew-shaped subdiagram of $\lambda$ pertaining to $\sigma.$
A filling $T$ of the skew Young diagram $\lambda[\sigma],$ where $\sigma \in \{\nu,\zeta\},$ with natural numbers is a \textbf{Littlewood-Richardson (LR) tableau} if
\begin{enumerate}
	\item it is a \textit{semistandard Young tableau} (its entries weakly increase along each row and strictly increase down each column), and
	\item the concatenation of reversed rows in $T$ is a \textit{lattice word} (a word in which every prefix contains at least as many $i$s as $(i + 1)$s).
\end{enumerate}
The \textit{enumeration} of a (skew-shaped) tableau $T$ is the partition whose $i$-th part counts the number of $i$s in $T$. 
By \cite[Theorem 4.9.4]{Sag01} (see also \cite{KLMS12,GL20}), the iterated Littlewood-Richardson coefficient $c^\lambda_{\mu\nu\zeta}$ counts the number of
enumerations of two disjoint skew-shaped diagrams $\lambda[\nu]$ and  $\lambda[\zeta]$ (with $q$ and $r$ boxes respectively) whose union is $\la/\mu$  such that $\la / \lambda[\zeta]$ is a valid partition and both  $\lambda[\nu]$ and  $\lambda[\zeta]$ are Littlewood-Richardson tableaux.

\begin{definition}
	We will often refer to a tuple of partitions $(\la,\mu,\nu,\zeta)$ as \textbf{valid} if $\mu,\nu$ and $\zeta$ are subpartitions of $\la$ and the above conditions of enumerating the Young diagram of $\la$ are satisfied.
	For $\sigma \in \{\mu,\nu,\zeta\}$ we say that $\lambda[\sigma]$ is a \textbf{minimal LR tableau} if it is a LR tableau, where (going left to right, top to bottom) each box is filled with the smallest possible number. 
\end{definition}

\begin{example}
	Let $p=q=3,r=2$ and $\la=(3,3,2) \vdash 8, \mu = (2,1)\vdash 3, \nu = (2,1) \vdash 3.$ If we choose $\zeta= (2)$ or $\zeta = (1,1),$ then we obtain a valid tuple of partitions. In fact, the left-most figure below shows a minimal LR tableau (when $\zeta = (2)$) and the middle figure shows a valid but not minimal enumeration (when $\zeta = (1,1)$). However, keeping the colors of the boxes from the first two figures and
	choosing $\nu = (3)$ instead of $\nu = (2,1)$ (and, say, $\zeta = (2)$) yields a situation where no valid enumeration is attainable (see the right-most figure, where the columns in the gray block are not strictly increasing).
	\begin{align*}
	\begin{ytableau}
		& 	&*(lightgray) 1\\
		&*(lightgray)1	& *(lightgray)2\\
		*(black) \textcolor{white}{1}& *(black)\textcolor{white}{1}\\
	\end{ytableau}
\hspace{3cm}
\begin{ytableau}
	& 	&*(lightgray) 1\\
	&*(lightgray)1	& *(lightgray)2\\
	*(black) \textcolor{white}{1}& *(black)\textcolor{white}{2}\\
\end{ytableau}
\hspace{3cm}
\begin{ytableau}
	& 	&*(lightgray) 1\\
	&*(lightgray)1	& *(lightgray)1\\
	*(black)\textcolor{white}{1} & *(black)\textcolor{white}{1}\\
\end{ytableau}
\end{align*} 
\end{example}

\begin{remark}
	As per the definition of the LR coefficient $c^\lambda_{\mu,\nu,\zeta}$, the Young diagram of the partition $\lambda \vdash n$ gets divided into 3 disjoint skew-shaped subdiagrams pertaining to partitions $\mu,\nu$ and $\zeta$ respectively. A necessary condition for $c^\lambda_{\mu,\nu,\zeta}$ to be nonzero is that $\mu \vdash p,\nu \vdash q, \zeta \vdash r$ are subpartitions of $\lambda.$
	
	 We will often represent the boxes pertaining to the 3 partitions with a distinguished color:  white for the $\mu$ boxes, grey for the $\nu$ boxes and black for the $\zeta$ boxes. By definition, the Young diagram of $\la$ is constructed so that first the white boxes are added (they should form a valid partition $\mu$), then the grey boxes are added (so that the union of all boxes is again a valid partition) and finally, the black boxes are added to complete the shape of $\la.$
\end{remark}

\subsection{Contents formula and Robin Hood moves}
To analyze $\Xi$ from \eqref{eq:lmn}, we use its explicit form known as the contents formula. The proof of the following proposition follows along the lines of \cite[Lemma B.1 and Lemma 6.5]{KSV+}. We include it for completeness.
\begin{proposition} Let $n, p,q,r \in \N$ be such that $p\geq q\geq r >0$ and $p+q+r=n.$  Let $(\la,\mu,\nu,\zeta)$ be a tuple of partitions with $\la \vdash n, \mu \vdash p, \nu \vdash q, \zeta \vdash r$ such that $\mu,\nu,\zeta$ are all subpartitions of $\la.$ Then
\begin{align}\label{eq:Xi}
	\begin{split}
		\Xi(\lambda,\mu,\nu, \zeta) =& 2 \big(n (p + q)-p^2 - p q - q^2\big) - 2  \sum_{c \in \lambda} \textnormal{content}(c) + \sum_{c \in \mu} \textnormal{content}(c)\\  +& \sum_{c \in \nu} \textnormal{content}(c) + \sum_{c \in \zeta} \textnormal{content}(c),
	\end{split}
\end{align}
where
the content of a box located in row $r$ and column $c$ is defined as $\text{content}(\text{box}) = c - r.$
\end{proposition}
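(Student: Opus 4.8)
The plan is to reduce everything to the single clique eigenvalue formula recorded above. By the definition of $\Xi$ in \eqref{eq:lmn}, $\Xi(\lambda,\mu,\nu,\zeta)=\eta_\lambda-\eta_\mu-\eta_\nu-\eta_\zeta$, so it suffices to rewrite each $\eta_\sigma$ for a partition $\sigma$ of $m$ boxes (with $(\sigma,m)$ ranging over $(\lambda,n),(\mu,p),(\nu,q),(\zeta,r)$) purely in terms of its content sum, and then collect terms.

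The technical heart is the single rewriting
\[
\eta_\sigma = m(m-1) - 2\sum_{c\in\sigma}\mathrm{content}(c),\qquad m=|\sigma|.
\]
I would prove this directly from the formula $\eta_\sigma = m^2 + \tfrac{d(d-1)(2d-1)}{6} - \sum_{k=1}^d(\sigma_k-(k-1))^2$ (padding $\sigma$ with zero parts up to $d$ rows): expanding the square yields $\sum_k\sigma_k^2 - 2\sum_k(k-1)\sigma_k + \sum_{k=1}^d(k-1)^2$, and since $\sum_{k=1}^d(k-1)^2 = \tfrac{d(d-1)(2d-1)}{6}$, the explicit $d$-dependent constant cancels \emph{inside} $\eta_\sigma$. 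The surviving quadratic expression is then matched to the elementary evaluation $\sum_{c\in\sigma}\mathrm{content}(c)=\sum_k\sum_{j=1}^{\sigma_k}(j-k)=\tfrac12\sum_k\sigma_k^2+\tfrac12 m-\sum_k k\sigma_k$, which converts it into $m(m-1)-2\sum_c\mathrm{content}(c)$. Conceptually this is the classical fact that the class sum of all transpositions acts on the irreducible module $V_\sigma$ by the scalar $\sum_c\mathrm{content}(c)$, together with $H^d_{K_m}=m(m-1)I-2\sum_{i<j}\mathrm{Swap}^{(d)}_{ij}$; this is the computation underlying \cite[Lemma B.1 and Lemma 6.5]{KSV+}.

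Substituting the four instances into $\eta_\lambda-\eta_\mu-\eta_\nu-\eta_\zeta$ and collecting, the four content sums combine into the content-sum part of \eqref{eq:Xi} (the $\lambda$-term entering with the opposite sign to the $\mu,\nu,\zeta$-terms), while the scalar part becomes $n(n-1)-p(p-1)-q(q-1)-r(r-1)$. Using $p+q+r=n$ this collapses to $n^2-p^2-q^2-r^2$, and expanding $n^2=(p+q+r)^2$ gives $n^2-p^2-q^2-r^2=2(pq+pr+qr)$, which one checks equals $2\big(n(p+q)-p^2-pq-q^2\big)$; this is exactly the constant in \eqref{eq:Xi}.

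The only delicate point is the content-conversion identity, and within it the cancellation of $\tfrac{d(d-1)(2d-1)}{6}$: this constant must disappear inside each $\eta_\sigma$ separately, since three of the four $\eta$'s enter with a minus sign and the constants would otherwise not cancel across terms. Once that identity is established, the remainder is linear bookkeeping of the content sums and the elementary simplification of the scalar part via $p+q+r=n$, and I do not expect any genuine obstacle beyond careful accounting.
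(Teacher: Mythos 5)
Your proposal is correct and follows essentially the same route as the paper: both reduce to the single identity $\eta_\sigma = m(m-1) - 2\sum_{c\in\sigma}\mathrm{content}(c)$ and then substitute it into $\eta_\lambda-\eta_\mu-\eta_\nu-\eta_\zeta$, simplifying the scalar part via $p+q+r=n$. The only (harmless) difference is that you verify this identity by directly expanding the explicit formula for $\eta_\sigma$ and the content sum, whereas the paper cites \cite[Lemma 6.5]{KSV+} together with the Frobenius--Lassalle character formula; your algebra, including the cancellation of $\tfrac{d(d-1)(2d-1)}{6}$ under zero-padding, checks out.
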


\begin{proof}
	For any partition $\sigma$ let $\Sigma(\sigma) = \sum_{c \in \sigma}\text{content}(c).$ By \cite[Theorem 4]{Lassalle} or \cite{Fro01},  $$\Sigma(\zeta) = \binom{n}{2}\frac{\chi_\sigma((ij))}{\chi_\sigma(e)},$$
	where $\chi_\sigma$ is the character pertaining to the irrep of $S_n$ given by $\sigma.$ By \cite[Lemma 6.5]{KSV+},
	\begin{equation} \label{eq:eta_simplified}
		\eta(\sigma) = m^2 - m - 2 \Sigma(\sigma).
	\end{equation}
	Replacing $\eta_\lambda,\eta_\mu,\eta_\nu,\eta_\zeta$ in $\Xi(\la,\mu,\nu,\zeta)$ with \eqref{eq:eta_simplified} completes the proof of the proposition.
\end{proof}

Hence, to increase the value of $\Xi,$ one should either decrease the sum of contents of $\la$ or increase the sum of contents of $\mu,\nu$ or $\zeta.$ In other words, at least one $\la$ box should be moved down or left  or at least one $\mu,\nu$ or $\zeta$ box should be moved up or right.
Throughout, we shall assume that the appearing tableaux are minimal LR tableaux.
In this way, the enumerations of $\lambda[\nu]$ and $\lambda[\zeta]$
are also minimal in that the sum of contents of $\nu$ and $\zeta$ are minimal. Since we are maximizing $\Xi,$ the latter can be assumed w.l.o.g.~by \eqref{eq:Xi}.

\begin{remark}
	Note that if $d=1,$ then clearly, the solution to the $1$-QMC problem for $K_{p,q,r}$ is $0,$ attained at $\la =(n), \mu = (p), \nu = (q),\zeta = (r).$ The largest eigenvalue of the Hamiltonian can be easily computed by noting that the sum of contents of a partition $\sigma = (m) \vdash m$ equals $\binom{m}{2}.$
\end{remark}

We are now ready to properly state the main result (cf.~Theorem \ref{th:prelim}).

\begin{theorem}\label{th:main}
	Let $n, p,q,r \in \N$ be such that $p\geq q\geq r >0$ and $p+q+r=n.$
	\begin{enumerate}[(a)]
		\item The solution to the $2$-QMC problem for $K_{p,q,r}$ is attained at the tuple $(\la,\mu,\nu,\zeta)$ with $\la \vdash n, \mu \vdash p, \nu \vdash q, \zeta \vdash r,$ where $\mu=(p),\nu=(q),\zeta=(r)$ and
		\begin{enumerate}[(i)]
			\item    $\la= (p,q+r) = (p,n-p)$  if $p\geq q+r;$
			\item $\la$ is the balanced partition of $n$ of height $2,$ i.e., $\la_1-\la_2 \leq1,$ if $p < q+r.$
		\end{enumerate}
		In (i), the solution is $2(n-p)(p+1)$ and in (ii), the solution is $4k^2-1$ if $n=2k$ and $4k(k+1)-3$ if $n=2k+1.$
		
		\item The solution to the $3$-QMC problem for $K_{p,q,r}$ is  $2 n (2 + p + q) - 2 (p^2 + q + q^2 + p (2 + q)),$ attained at the tuple $(\la,\mu,\nu,\zeta)$ with $\la \vdash n, \mu \vdash p, \nu \vdash q, \zeta \vdash r,$ where $\la=(p,q,r), \mu = (p), \nu = (q)$ and $\zeta = (r).$
	\end{enumerate}
\end{theorem}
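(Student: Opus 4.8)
The plan is to optimize the functional $\Xi$ from \eqref{eq:lmn} directly, exploiting that single rows are extremal for the content sum. Write $\Sigma(\sigma)=\sum_{c\in\sigma}\mathrm{content}(c)$ and recall from \eqref{eq:eta_simplified} that $\eta_\sigma=m^2-m-2\Sigma(\sigma)$ for $\sigma\vdash m$. Since $\Sigma(\sigma)\le\binom{m}{2}$ with equality exactly when $\sigma=(m)$, we get $\eta_\sigma\ge0$ with equality if and only if $\sigma$ is a single row. Because $\Xi(\lambda,\mu,\nu,\zeta)=\eta_\lambda-\eta_\mu-\eta_\nu-\eta_\zeta$ and the three subtracted terms are nonnegative, the ideal situation is $\mu=(p),\nu=(q),\zeta=(r)$, which forces $\Xi=\eta_\lambda$. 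This ideal is attainable precisely when $c^\lambda_{(p)(q)(r)}\ne0$; as the iterated Littlewood--Richardson coefficient for three single rows is the Kostka number $K_{\lambda,(p,q,r)}$, it is nonzero if and only if $\lambda\trianglerighteq(p,q,r)$ in dominance order. First I would record that $\eta_\lambda=2\binom{n}{2}-2\Sigma(\lambda)$ is antitone in dominance (the content sum strictly increases under the elementary move lifting a box to a higher row), so over the feasible filter $\mathcal F=\{\lambda:\het(\lambda)\le d,\ \lambda\trianglerighteq(p,q,r)\}$ the quantity $\eta_\lambda$ is maximized at the dominance-least element of $\mathcal F$.

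Identifying this least element yields the claimed optimizers. For $d=3$ it is $(p,q,r)$ itself; for $d=2$, where height-two shapes are totally ordered by $\lambda_1$, it is the balanced shape when $p<q+r$ and $(p,n-p)$ when $p\ge q+r$ (the constraint $\lambda_1\ge p$ coming from $\lambda\trianglerighteq(p,q,r)$). Exhibiting the tuple $\big(\lambda,(p),(q),(r)\big)$ with this $\lambda$ gives the lower bound, and substituting into $\eta_\lambda=2\binom{n}{2}-2\Sigma(\lambda)$ produces the stated eigenvalue: for instance $\eta_{(p,n-p)}=2(n-p)(p+1)$ in case (a)(i), while the height-two balanced shapes are handled by the parity split $n=2k$ versus $n=2k+1$.

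What remains is the matching upper bound: ruling out that a tuple with $\lambda\notin\mathcal F$ (a strictly more balanced $\lambda$, which necessarily forces $\eta_\mu+\eta_\nu+\eta_\zeta>0$) beats $\max_{\mathcal F}\eta_\lambda$. One case is immediate: for $d=2$ with $p<q+r$ the global maximizer of $\eta_\lambda$ over all height-two shapes is already balanced and already lies in $\mathcal F$, so $\max\Xi\le\max_\lambda\eta_\lambda=\eta_{\mathrm{bal}}$ with equality. The remaining cases --- $d=2$ with $p\ge q+r$, and all of $d=3$ --- genuinely require the content trade-off. Here I would parametrize balancing past the extremal shape by the number $t$ of boxes pushed from a higher to a lower row, take the best compensating data (the widest admissible $\mu$, namely $\mu_1=\lambda_1$, together with single-row $\nu,\zeta$), and evaluate $\Xi$ through the contents formula. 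For $d=2$ this collapses cleanly: the $\binom{\lambda_1}{2}$-contributions shared by $\lambda$ and $\mu$ cancel and one is left with $\Xi(t)=\Xi(0)-2(q+r)\,t$, strictly decreasing, so the maximum is at $t=0$. For $d=3$ I would run the analogous two-parameter version, pushing boxes along the row $1\to2$, row $2\to3$, and $\lambda_3>r$ directions, verifying that each elementary push contributes a negative linear increment to $\Xi$.

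The main obstacle is precisely this upper bound in the hard cases, and two points need care. First, for a fixed $\lambda$ one must verify that the widest $\mu$ together with single-row $\nu,\zeta$ is actually Littlewood--Richardson admissible, i.e.\ $c^\lambda_{\mu\nu\zeta}\ne0$: this is where the skew-tableau and lattice-word conditions enter, confirming that the residual shape $\lambda/\mu$ can be tiled by a horizontal $\nu$-strip followed by a horizontal $\zeta$-strip. Second, for $d=3$ the failure of $\lambda\trianglerighteq(p,q,r)$ can occur in two independent ways ($\lambda_1<p$ or $\lambda_3>r$), so the Robin Hood moves must be organized to reach every such $\lambda$ while keeping each increment of $\Xi$ nonpositive. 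Assembling these moves into a monotone path down to the extremal shape, together with the parity bookkeeping for the $d=2$ balanced shapes, is the technical heart of the argument.
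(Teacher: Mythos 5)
Your approach is genuinely different from the paper's. The paper proves the theorem by a long chain of local box-exchange (``Robin Hood'') moves on coloured Young diagrams, checking at each step that $\Xi$ does not decrease; you instead argue globally, using that $\eta_\sigma\ge 0$ with equality exactly for one-row shapes, that $c^\la_{(p)(q)(r)}=K_{\la,(p,q,r)}\neq 0$ iff $\la\trianglerighteq(p,q,r)$, and that $\eta_\la$ is antitone in dominance, so that over the feasible filter the optimum sits at the dominance-least element. For $d=2$ this is complete and arguably cleaner than the paper: the subcase $p<q+r$ is immediate from $\Xi\le\eta_\la\le\eta_{\mathrm{bal}}$, and for $p\ge q+r$ your bound $\Xi\le\eta_{(p-t,\,n-p+t)}-\eta_{(p-t,\,t)}=2(n-p)(p+1)-2(q+r)t$ checks out, and it only uses the necessary condition $\mu_1\le\la_1$ together with $\eta_\nu,\eta_\zeta\ge0$, so no LR admissibility is needed on the upper-bound side. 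One point to flag rather than assert: in case (a)(ii) your method yields $\eta_{\mathrm{bal}}=2k(k+1)$ for $n=2k$ and $2k(k+2)$ for $n=2k+1$, which does not equal the stated $4k^2-1$ and $4k(k+1)-3$; a direct check on $K_{1,1,1}$ gives $6$, not $5$, supporting your formula, so this discrepancy with the theorem's displayed values needs to be resolved rather than claimed as agreement.

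The genuine gap is the $d=3$ upper bound, which you yourself defer as ``the technical heart'' and which is precisely the bulk of the paper's Subsection 3.2. Two concrete problems with your sketch. First, the claim that ``each elementary push contributes a negative linear increment'' is false: for $n=6$, $p=4$, $q=r=1$ the tuple $\big((2,2,2),(2,2),(1),(1)\big)$ is LR-admissible and gives $\Xi=36-12=24$, the same value as the claimed optimizer $\big((4,1,1),(4),(1),(1)\big)$, even though $(2,2,2)\not\trianglerighteq(4,1,1)$; so increments along a path joining these tuples cannot all be strictly negative, and you must work with non-strict inequalities and still certify that no such path overshoots the target value. Second, for a fixed $\la\not\trianglerighteq(p,q,r)$ you propose to lower-bound $\eta_\mu+\eta_\nu+\eta_\zeta$ by relaxing joint LR-admissibility to the separate containments $\mu,\nu,\zeta\subseteq\la$ and taking the dominance-largest choices; that this relaxation is always tight enough to beat $\eta_{(p,q,r)}$, uniformly over the height-$3$ shapes failing dominance in either of the two ways ($\la_1<p$ or $\la_1+\la_2<p+q$, possibly both), is exactly what must be proved and is not carried out. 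Until that verification is done, part (b) remains unproved in your write-up.
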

The proof of Theorem \ref{th:main} is divided into two parts: Subsection \ref{sec:d2} handles the case $d=2$ and Subsection \ref{sec:d3} addresses the case $d=3.$

\section{Proof of Theorem \ref{th:main}}
To solve the $d$-QMC problem for complete tripartite graphs when $d=2,3$ 
we begin with a lemma establishing that only partitions of maximal height can maximize the value of $\Xi$ as in  \eqref{eq:lmn}.

\begin{lemma}\label{le:maxd}  Let $d \in \{2,3\}$ and let $n, p,q,r \in \N$ be such that $p\geq q\geq r >0$ and $p+q+r=n.$  Let $(\la,\mu,\nu,\zeta)$ be a tuple of partitions with $\la \vdash n, \mu \vdash p, \nu \vdash q, \zeta \vdash r$ such that $\mu,\nu,\zeta$ are all subpartitions of $\la.$ Then the value of \eqref{eq:lmn} is maximal when $\la$ has height $d.$
\end{lemma}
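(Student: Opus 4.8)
## Proof Plan for Lemma \ref{le:maxd}

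The plan is to exploit the contents formula \eqref{eq:Xi} together with the structural constraint imposed by the Littlewood--Richardson rule: since $\mu,\nu,\zeta$ must fit as subpartitions inside $\la$ and the whole of $\la$ is covered, the height of $\la$ is forced to be at least $\max(\het(\mu),\het(\nu),\het(\zeta))$ and at most $\het(\mu)+\het(\nu)+\het(\zeta)$. Because we work over the $d$-swap algebra, we may additionally restrict to $\het(\la),\het(\mu),\het(\nu),\het(\zeta)\le d$. The goal is to show that among all feasible tuples, the maximum of $\Xi$ cannot be attained when $\het(\la)<d$. The natural strategy is an \emph{exchange argument}: starting from a hypothetical optimizer with $\het(\la)=h<d$, I would construct a modified valid tuple with $\het(\la)=h+1$ (or directly height $d$) and show that $\Xi$ strictly increases, contradicting maximality.

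First I would recall, from the discussion following the contents formula, that increasing $\Xi$ corresponds to pushing boxes of $\la$ \emph{down or left} (decreasing $\sum_{c\in\la}\text{content}(c)$) while keeping $\sum_{c\in\mu}+\sum_{c\in\nu}+\sum_{c\in\zeta}$ as large as possible. Moving a box of $\la$ from a high-content position in an existing row down to start a new $(h+1)$st row decreases $\sum_{c\in\la}\text{content}(c)$, and by the factor $-2$ in front of this sum in \eqref{eq:Xi}, this pushes $\Xi$ upward. The key computation is to track exactly how much the four content-sums change under such a move and to verify the net effect is positive; since the coefficient on the $\la$-sum is $2$ while each of $\mu,\nu,\zeta$ carries coefficient $1$, a move that lowers a $\la$-box dominates any forced decrease in the subpartition sums. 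I would handle the two target dimensions $d=2$ and $d=3$ in turn, as the case $d\le 3$ is all that is claimed; this keeps the number of possible height profiles $(\het(\mu),\het(\nu),\het(\zeta))$ small and finite, so the exchange can be checked in each configuration.

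The step I expect to be the main obstacle is guaranteeing that the modified tuple \emph{remains valid}, i.e.\ that after moving a box to create the new row, one can still fill $\la[\nu]$ and $\la[\zeta]$ as minimal LR tableaux with $c^\la_{\mu\nu\zeta}\neq 0$. The LR lattice-word and semistandard conditions are delicate: adding a box in a new row may break the strict-increase-down-columns requirement or the lattice condition on the reversed-row reading word. I would therefore be careful to choose \emph{which} box to relocate so that the new configuration still admits a valid minimal filling --- typically relocating a box from the end of the last nonempty row of the $\mu$- or $\nu$-region, where the subpartitions have the most freedom. A clean way to sidestep case-by-case filling checks is to appeal to the fact, emphasized in the preliminaries, that we may assume $\mu=(p),\nu=(q),\zeta=(r)$ are single rows when searching for the optimum (as these minimize the subpartition content-sums subject to the constraints), reducing the problem to analyzing the shape of $\la$ alone subject to its rows accommodating the three horizontal strips. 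Under that reduction, validity becomes a purely combinatorial condition on $\la$, and the exchange argument that lowers an $\la$-box into a fresh row is straightforward to verify while preserving feasibility, completing the proof that $\het(\la)=d$ at the optimum.
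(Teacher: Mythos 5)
Your overall strategy --- an exchange argument driven by the contents formula \eqref{eq:Xi}, relocating a box of $\la$ downward into a new row so as to decrease $\sum_{c\in\la}\mathrm{content}(c)$ --- is exactly the approach the paper takes. However, two steps of your plan have genuine gaps. First, the claim that the coefficient $2$ on the $\la$-sum versus $1$ on the subpartition sums means ``a move that lowers a $\la$-box dominates any forced decrease in the subpartition sums'' is not true in general. Relocating a box changes not only that box's label but can force a re-enumeration of other boxes of the same colour, and the resulting drop in $\sum_{c\in\sigma}\mathrm{content}(c)$ for the affected $\sigma\in\{\mu,\nu,\zeta\}$ is not automatically bounded by twice the drop in $\la$'s content sum. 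The paper's proof must \emph{choose which box to move} according to the colours of the last box of the last row and of its neighbours; in one subcase (where the relocated box belongs to $\zeta$ and forces a row change inside $\zeta$) it needs the quantitative inequality $\la_1\ge 2\la_2+1$, derived from $p\ge q\ge r$, to conclude that the gain from $\la$ outweighs the loss from $\zeta$, and in another subcase no single-box move suffices and all grey and black boxes must be relocated simultaneously (the ``Robin Hood moves''). This case analysis is the substance of the proof and is absent from your plan.

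Second, the proposed shortcut of assuming $\mu=(p)$, $\nu=(q)$, $\zeta=(r)$ is not available. The preliminaries only let you assume that the fillings are \emph{minimal LR tableaux} for a given colouring of the diagram; the resulting enumerations $\nu,\zeta$ are then determined by where the grey and black boxes sit inside $\la$ and are generally not single rows. That the optimum is attained at single-row $\mu,\nu,\zeta$ is a conclusion of the main theorem, established only after the long sequences of moves in Subsections \ref{sec:d2} and \ref{sec:d3}; invoking it inside Lemma \ref{le:maxd} would be circular, since the lemma must hold for arbitrary valid tuples. (As a side remark, a single row \emph{maximizes} rather than minimizes the content sum of a partition of fixed size, though maximizing is indeed what the sign in \eqref{eq:Xi} calls for.) So the plan points in the right direction, but the part you yourself flag as the main obstacle --- preserving validity and controlling the subpartition content sums under the exchange --- is precisely where the proof lives, and it is not resolved here.
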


\begin{proof}
	Assume that the height of $\la$ is $0<d'<d.$ Color the boxes pertaining to $\mu,\nu$ and $\zeta$ in white, grey and black, respectively.
	
	If $d'=1,$ we have $\mu=(p),\nu=(q),\zeta =(r).$ Now moving one black box to a new (i.e., the second) row does not change $\mu,\nu,\zeta$ and it decreases the sum of contents of $\la.$ Such a move thus increases the value of $\Xi.$ 
	We can now assume $d'=2, d=3.$ 
	Consider the last box in the last row, denoted by $\star.$
	We separate several cases:
	
	(a) If box $\star$ is white, then it is immediate that both the box right above it and the first box in the second row are also white. Note that in this case $\zeta = (r).$  Now move one black box (from the first row) to a new row (i.e., to the third row). Doing so, $\mu, \nu$ and $\zeta$ stay the same, however, the sum of contents of $\la$ decreases (since we move the box down and left). Hence, the value of $\Xi$ increases.
	
	(b) If box $\star$ is black, then consider the box right above it, denoted by $\star',$ and the first box of the second row, denoted by $\star''.$
	Note that  $\star'$ and $\star''$ cannot be both black.
	
	If $\star'$ and $\star''$ are both white or grey, then move $\star$ to a new row.
	In this case $\mu,\nu$ and $\zeta$ do not change after the move and the sum of contents of $\la$ decreases, thus the value of $\Xi$ increases.
	
	If $\star'$ is black and $\star''$ is either white or grey, then again move $\star$ to a new row. In this case, to maintain a minimal enumeration of the black boxes, one box from the second row of $\zeta$ gets moved to the first row. This increases the sum of contents of $\zeta.$ The partitions $\mu$ and $\nu$ do not change after the move and the sum of contents of $\la$ decreases (as the box $\star$ is moved down and left). Hence, the value of $\Xi$ increases.
	
	It remains to consider the case when $\star'$ is either white or grey and $\star''$ is black.	If there is at least one black box in the first row, then move it to a new (i.e., the third) row. Doing so, the sum of contents of $\zeta$ decreases, but  the sum of contents of $\la$ decreases even more. Indeed, note that since $p\geq q \geq r,$ we have $\la_1 \geq 2 \la_2 + 1$ and $\zeta_1 < \la_1.$ Hence the decrease of the content of the box moved in $\zeta$ is lower than the decrease of the content of the black box moved in $\la.$ The value of $\Xi$ thus increases after the move.
	
	 If there are no black boxes in the first row, then we must have $\la_1=p+q, \mu=(p),\nu=(q),\la_2 = r,\zeta = (r).$ In this case move all the black boxes to the third row  and move all grey boxes to the second row (a series of so-called Robin Hood moves). Doing so, $\mu, \nu$ and $\zeta$ stay the same and sum of contents of $\la$ decreases, whence the value of $\Xi$ increases.
	 
	 (c) If box $\star$ is grey,  again consider the boxes $\star'$ and $\star''$ as defined in (b) above.  Note that in this case $\star'$ and $\star''$ can only be white or grey.
	 
	 If $\star'$ and $\star''$ are both white or both grey or if $\star'$ is grey and $\star''$ is white, then similar arguments as in (b) show that moving $\star$ to a new row increases the value of $\Xi.$
	 
	 If $\star'$ is white and $\star''$ is grey, then all boxes in the second row are grey. This means that the last box in the first row is black (and $\zeta = (r)$). Now moving one black box (from the first row) to a new row does not change $\mu,\nu$ and $\zeta$ and it clearly decreases the sum of contents of $\la.$ Hence, the value of $\Xi$ increases.
\end{proof}

\subsection{The case $d=2$} \label{sec:d2}


	Recall the convention from Lemma \ref{le:maxd} of coloring the boxes pertaining to $\mu,\nu$ and $\zeta$ in white, grey and black, respectively.
	By Lemma \ref{le:maxd}, to maximize the value of $\Xi,$ the height of $\la$ must be $2.$ Let $(\la,\mu,\nu,\zeta)$ be any tuple of partitions with $\la \vdash n, \mu \vdash p, \nu \vdash q, \zeta \vdash r$ and $p+q+r=n$ such that $\mu,\nu,\zeta$ are all subpartitions of $\la.$

%
%
	
(a) We first prove that moving all black boxes to the second row increases $\Xi.$
So assume there is at least one black box in the first row.
\begin{enumerate}[(i)]
	\item If there are only black boxes in the second row, then all black boxes in the first row can be moved to the second row to form a valid partition (since $p\geq q \geq r$). In this way $\mu,\nu$ and $\zeta$ do not change (they are all one-row partitions) and the sum of contents of $\la$ decreases (since boxes are moved down and left). Hence, $\Xi$ increases.
	
	\item If there are white or grey boxes in the second row, then exchange the right-most non-black box  in the second row (denote it $\dagger'$) with the left-most black box in the first row (denote it $\dagger$). Note that box $\dagger$ is to the left of box $\dagger'$ and has a white or grey box right above it. 
	In any case the sum of contents of $\nu$ increases after the move: if $\dagger'$ had a white box right above it, $\nu$ stays the same, and if $\dagger$ had a grey box right above it, one box from the second row of $\nu$ moves to the first row of $\nu.$
	On the other hand, box $\dagger$ has either another black box or nothing right below it. In any case the sum of contents of $\zeta$ increases after the move: if $\dagger$ had nothing right below it, $\zeta$ stays the same, and if $\dagger$ had a black box right below it, one box from the second row of $\zeta$ moves to the first row of $\zeta.$
	Since $\mu$ and $\la$ do not change, the described move increases the value of $\Xi.$
	
	Now repeat the same move as long as possible. Such a move is no longer possible when either all the black boxes are in the second row, in which case we are done, or there are still black boxes in the first row, but no non-black boxes left in the second row, in which case we are in the situation of (i).
	
\end{enumerate}

(b)	Now assume all black boxes are in the second row (note that any grey or white boxes might still be in the second row as well). We shall prove that moving all white boxes to the first row increases the value of $\Xi.$

\begin{enumerate}[(i)]
	\item If there are any white boxes in the second row and at least one grey box in the first row, use the same procedure as in (ii) from part (a) above to exchange the grey boxes in the first row with the white boxes in the second row. When this construction no longer applies, there are either no grey boxes left in the first row or there are no white boxes left in the second row. 
	\item If the first row only has white boxes and there is at least one white box in the second row, we must have $p\geq q+r.$ In this case move the white boxes from the second to the first row. Doing so, the sum of contents of $\la$ increases by $a(\la_1-\la_2+2),$ where $a$ is the number of white boxes in the second row, but the sum of contents of $\mu$ increases by $a(\la_1-a + 1),$ which is bigger than  $a(\la_1-\la_2+2)$ since $\la_2 \geq a+1.$
	Since $\nu$ and $\zeta$ do not change, the value of $\Xi$ increases.
	Now $\mu,\nu,\zeta$ are optimal to maximize $\Xi$ and reversing the above argument shows that changing $\la$ to make its two part closer in length (which would decrease the sum of contents of $\la$) decreases the value of $\Xi.$ So we have found the tuple of partitions giving the maximal value of $\Xi.$
	\item If the first row has at least one grey box and the second row has no white box (so all the white boxes are in the first row), then we must have $\mu = (p),\nu=(q)$ and $\zeta = (r)$ at this point. So $\mu,\nu,\zeta$ are optimal to maximize $\Xi.$
	
	If $p\geq q+r,$ then move all the grey boxes to the beginning of the second row. This moves do not change $\mu,\nu,\zeta$ and they decrease the sum of contents of $\la.$ Thus, the value of $\Xi$ increases. Now changing $\la$ to make its two parts closer in size decreases the value of $\Xi$ as seen in (ii) above (from part (b)). We have hence found the tuple of partitions giving the maximal value of $\Xi.$
	The same can be deduced if the first row only has white boxes and the second row has no white boxes.
	
	If $p< q+r,$ then move some of the grey boxes from the first row to the second row to make $\la$ balanced. Now all four partitions $(\la,\mu,\nu,\zeta)$ are optimal to maximize $\Xi$ and we are done.

\end{enumerate}
 
 We have thus shown that $\Xi$ is maximized at the tuple $(\la,\mu,\nu,\zeta)$ with $\la \vdash n, \mu \vdash p, \nu \vdash q, \zeta \vdash r,$ where $\mu=(p),\nu=(q), \zeta=(r)$  and either $\la= (p,q+r)$ if $p\geq q+r$ or  
  $\la$ is the balanced partition of $n$ of height $2$ if $p < q+r.$ 
  The maximal eigenvalue of the $2$-QMC  Hamiltonian in the two cases $p\geq q+r$ and $p<q+r$ can be now directly computed using \eqref{eq:Xi}.\looseness=-1

\subsection{The case $d=3$}\label{sec:d3}


 Recall the convention of coloring the boxes pertaining to $\mu,\nu$ and $\zeta$ in white, grey and black, respectively.
	By Lemma \ref{le:maxd}, to maximize the value of $\Xi,$ the height of $\la$ must be $3.$

	(a) The first goal is to move any white boxes from the third row to the first row.
	\begin{enumerate}[(i)]
		\item 	If there are only white boxes in the first row, move all the white boxes from the third row to the first row and shift the remaining boxes in the third row to the left. Note that, before the move, the boxes right above the white ones in the third row were also white, so the obtained tuple of partitions is still valid.
		After this move, the sum of contents of $\la$ increases (some of the boxes in its third row are moved to its first row, this lowers the value of $\Xi$), but the sum of contents of $\mu$ increases at least as much. This is because when computing the change of contents of the $\la$ boxes, we consider them moving from the end of the third row (i.e., $\la$ does not distinguish between different shades of boxes), whereas the $\mu$ boxes are moved from the beginning of the  third row.
		
		The sum of contents of $\nu$ ($\zeta$ resp.) also increases, since the number of shared columns between the grey (black resp.) boxes in the second and third row decreases. Indeed, $\nu$ ($\zeta$ resp.) has at most $2$ rows at start and after the move $\nu$ ($\zeta$ resp.) either stays the same or some of its boxes in the second row move to the first row. 
		We conclude that the value of $\Xi$ does dot decrease after the described move.
		
		\item 
		If there are grey or black boxes in the first row, then exchange the left-most non-white box in the first row (denoted by $\star$) with the right-most white box in the third row (denoted by $\star'$). Note that $\star$ is strictly to the right of $\star'$ and the box right above $\star'$ is white.  If $\star$ is black and the box to the right of $\star'$ is grey, then after the exchange, shift $\star$ (now positioned in the third row) at the end of the grey boxes in the third row to make the new tuple of partitions $(\la,\mu,\nu,\zeta)$ valid.

		After this move, $\la$ does not change and the sum of contents of $\mu$ increases (one box gets moved up and right). The sum of contents of $\nu$ either stays the same or increases. Indeed,  if $\star$ is grey, then $\nu$ either stays the same (if the box right below $\star$ is black) or its sum of contents increases (if the box right below $\star$ is grey) since one of its boxes is move up and right. Moreover, if $\star$ is black, (this implies that $\nu$ has at most $2$ rows at start and) the grey boxes in the third row get shifted to the right after the move, which means that some of the boxes in the second row of $\nu$ might move to the first row. If the latter happens, the sum of contents of $\nu$ increases.
		
		Similarly, the sum of contents of $\zeta$ either stays the same or increases after the move. Indeed, note that the boxes right after the grey boxes in the third row are either white or grey. Hence, if $\star$ is black, it is labelled by $1$ (i.e., lies in the first row of $\zeta$) before and after the move. If the box right below $\star$ is grey at start, $\zeta$ does not change. However, if the box right below $\star$ is black at start, then the sum of contents of $\zeta$ increases as one of its boxes is moved up and right after the move. It is clear that $\zeta$ stays the same after the move if $\star$ is grey.
		Hence, one such exchange of boxes increases the value of $\Xi.$
	\end{enumerate}
	
		Repeat this step several times until there are either no white boxes in the third row, in which case we move to (b), or there are no more non-white boxes in the first row, in which case we use construction from (i) above.

\vspace{5mm}
	(b) When there are no white boxes in the third row, we proceed by moving all black boxes from the first row to the second or third row.
	
	Denote the left-most black box in the first row by $\star$ and denote the right-most non-black box in the third row by $\star'$ (note that $\star'$ is in fact grey). If there is any box right below $\star,$ it is a black one. On the other hand, the box right above $\star'$ is white or grey. 
	In any case both $\star$ and $\star'$ can be labelled by $1$ after the exchange. Hence, the sum of contents of $\nu$ and $\zeta$ either stay the same or increase after the move. Since $\la$ and $\mu$ do not change, the value of $\Xi$ does not decrease after this exchange.

	Repeat this exchange until there are either no black boxes in the first row, in which case we proceed with (c) below, or there are no grey boxes in the third row.
	If the latter happens, then since $p \geq q \geq r,$ there is a place either in the third row or in the second row, where a black box can be put so that right above it there is a non-black box.  This move does not increases the value of $\Xi$ as $\mu,\nu,\zeta$ stay the same and the sum of contents of $\la$ increases as one of its boxes gets moved down and left.
	Repeat this process until there are no black boxes in the first row.
	\vspace{5mm}

	(c) When there are no white boxes in the third row and no black boxes in the first row, we continue by moving all white boxes from the second row to the first row.
	
	\begin{enumerate}[(i)]
		\item Exchange the left-most grey box in the first row (denoted by $\star$) with the right-most white box in the second row (denoted by $\star'$). 
		The sum of contents of $\mu$ clearly increases after this move and $\la$ and $\zeta$ do not change.
		
		The box right under $\star'$ before the move was either grey or black.
		 If it was black, then no matter what was right below $\star$ (grey, black or nothing), the sum of contents of $\nu$ increases after the move. Indeed, after the move, $\nu$ either stays the same (if the box under $\star$ was black or if there was an empty space) or one of its boxes gets moved to a higher row (if the box under $\star$ was grey).
		 
		 Assume the box right under $\star'$ before the move was grey. If the box right under $\star$ was grey as well, then after the move, $\nu$ either does not change (the box in the third row two places under $\star$ was black or if there was an empty space) or one of its boxes gets moved to a higher row, so up and right (if the box in the third row two places under $\star$ was grey).
		 If the box right under $\star$ was black or if there was an empty space, then it can happen that the decrease in the sum of contents of $\nu$ is greater than the increase in the sum of contents of $\mu,$ so the move must be modified as follows.
		 
		 Assume first there is at least one black box in the second row. Denote the left-most black box in the second row by $\star''.$ If the box right above $\star''$ is white, then first exchange $\star''$ with the grey box right below $\star'$ (denote this grey box by $\star'''$). If this move produces a feasible tuple of partitions, then it clearly increases the value of $\Xi.$ After that, $\star$ and $\star'$ can be exchanged as explained above.
		 The move involving $\star''$ is valid if the box  to the right of $\star'''$ is black. If this is not the case, then exchange $\star''$ with the right-most grey box in the third row. This move does not decrease the value of $\Xi$ as it is easy to see that $\nu,\zeta$ (as well as $\la,\mu$) stay the same after this move. Repeat exchanging black boxes in the second row with grey boxes in the third row until there are either no black boxes in the second row (in which case proceed with (ii) below) or the left-most black box in the third row is the right neighbour of $\star'''.$ If the latter is true, then exchanging $\star$ and $\star'$ betters the value of $\Xi$ as already explained.

		 If the box right above $\star''$ is grey instead of white and if the box to the right of $\star'''$ is black, then by assumption we have that $\star''$ is the box right under $\star.$ Now permute the four starred boxed as follows: $\star \to \star'' \to \star''' \to \star' \to \star.$ After this move $\la,\nu, \zeta$ do not change and the sum of contents of $\mu$ increases. Hence, the value of $\Xi$ increases. 
		 If the box to the right of $\star'''$ is grey, then exchange $\star''$ with the grey box right next to $\star'''.$ This move does not decrease the value of $\Xi$ since $\nu$ does not change (the box to the right of $\star'''$ was labelled by $2$ and it keeps the label after the move) and the sum of contents of $\zeta$ either stays the same or increases (it only increases if the box below $\star''$ was black). Continue exchanging black boxes in the second row with grey boxes in the third row until there are either no black boxes in the second row  (in which case move to (ii) below) or the left-most black box is the right neighbour of $\star'''.$
		  Now exchanging $\star$ and $\star'$ betters the value of $\Xi$ as explained above (the boxes under $\star$ and $\star'$ are both grey).

		\item  Now if there are no black boxes in the second row, then all black boxes are in the third row. As before, denote the left-most grey box in the first row by $\star$ and the right-most white box in the first row by $\star'.$ We assume the problematic situation where $\star$ and $\star'$ cannot be exchanged to better the value of $\Xi,$ i.e., we assume that there is an empty space below $\star$ and there is a grey box under $\star'.$
		Denote the box right under $\star'$ by $\star''.$ Now exchange $\star$ and $\star'$ and move $\star''$ from the third row to the second row. Note that since $\star$ had no box right under itself, we deduce that after the move, $\star''$ has a white box right above itself. Doing so, $\nu$ and $\zeta$ do not change and the sum of contents of $\la$ increases since one of its boxes is moved from the third to the second row. However, it is easy to see that the sum of contents of $\mu$ increases even more (one of its boxes is moved from the second to the first row) since the the $\mu$ box that gets moved starts to the left of the $\la$ box that gets moved. Hence, the value of $\Xi$ increases.

		Repeat this process of exchanging the grey boxes in the first row with the white boxes in the second row until there are either no grey boxes in the first row, in which case proceed with (iii) below,
		or there are no white boxes in the second row, in which case proceed with (d) below.

%
%
		
		\item When there are only white boxes in the first row and at least one white box in the second row, the goal is to move all grey boxes to the second row.
		 Begin by exchanging the left-most black box in the second row with the right-most grey box in the third row. The black box is labelled by $1$ before and after the move, and the grey box is labelled by $1$ or $2$ before the move and it is labelled by $1$ at the end. Hence, this exchange increases the value of $\Xi.$ 
		 
		 Repeat this process until there are no black boxes in the second row (which we want) or there are no grey boxes in the third row, but there are still black boxes in the second row. In the latter case, we move the remaining black boxes from the second row to the third row. This results in a valid partition since $p\geq q \geq r$ and clearly increases the value of $\Xi$ (since $\zeta$ has the optimal 1-row shape at the end and the sum of contents of $\la$ increases as we move boxes down and left).
		 
		 We can now assume the case where the first row only has white boxes, the third row only has black boxes and the second row has both white and grey boxes. Now moving the white boxes to the first row increases the value of $\Xi.$  This is because when computing the change of contents of the $\la$ boxes, we consider them moving from the end of the second row (i.e., $\la$ does not distinguish between different shades of boxes), whereas the $\mu$ boxes are moved from the beginning of the  second row.
	\end{enumerate}

(d)  When all white boxes are in the first row and there are no black boxes in the first row, we continue by moving all black boxes from the second row to the third row.

Exchange the left-most black box in the second row (denoted by $\star$) with the right-most grey box in the third row (denoted by $\star'$). We know that the box right above $\star'$ is grey and the box right above $\star$ is not black. So $\star'$ is labelled by $2$ before the move and after, it is labelled by either $1$ or $2.$ The sum of contents of $\nu$ thus does not decrease. The same can be said for $\star:$ it is labelled by $1$ or $2$ before the move and it is labelled by $1$ after the move. Hence, the sum of contents of $\nu$ does not decrease. Since $\mu$ and $\la$ stay the same after the move, the value of $\Xi$ does not decrease.

Repeat this process until there are either no black boxes in the second row (in which case proceed to (e) below) or there are no grey boxes in the third row. In the latter case, the first row only has white and grey boxes, the second row has grey and black boxes and the third row only has black boxes. So move all grey boxes from the first to the second row and move all black boxes from the second to the third row. This clearly produces a valid partition $\la$ since $p \geq q \geq r,$ it decreases the sum of contents of $\la$ as its boxes are moved down and left and it does not decrease the sums of contents of $\nu$ and $\zeta$ (since they both have one row at the end). The value of $\Xi$ thus increases.


\vspace{5mm}

(e) When all white boxes are in the first row and all black boxes are in the third row, we move all grey boxes from the first and third row  to the second row.
\begin{enumerate}[(i)]	
	\item We first move all grey boxes from the first row to the second row. Doing so, $\mu,\nu$ and $\zeta$ do not change (note that since all white boxes are in the first row, the right-most grey box in the second row is to the left of the left-most grey box in the first row). Moreover, the sum of contents of $\la$ decreases. Hence, the value of $\Xi$ increases.

	\item Finally, we move all grey boxes from the third row to the second row. 
Doing so, $\mu$ and $\zeta$ do not change, the sum of contents of $\la$ increases (some of the $\la$ boxes get moved from the third to the second row), but the sum of contents of $\nu$ increases even more (the second row of $\nu$ gets joined with the first row), which implies that the value of $\Xi$ increases.
Indeed, as we have argued before, when computing the change of contents of the $\la$ boxes, they are thought of moving from the end of the third row to the second row, whereas the $\nu$ boxes are moved from the beginning of the  third row to the second row, causing a larger increase in the sum of contents.	
\end{enumerate}
We have thus shown that $\Xi$ is maximized at the tuple $(\la,\mu,\nu,\zeta)$ with $\la \vdash n, \mu \vdash p, \nu \vdash q, \zeta \vdash r,$ where $\la=(p,q,r), \mu = (p), \nu = (q)$ and $\zeta = (r).$
	The maximal eigenvalue of the $3$-QMC  Hamiltonian can be now directly computed using \eqref{eq:Xi}.

\end{document}